\newtheorem{theorem}{Theorem}
\newtheorem{cond}{Conditions}[subsection]
\theoremstyle{definition}%
\newcommand{\norm}[1]{\left\lVert#1\right\rVert}%
\begin{document}

\title[Article Title]{Exponential Separation Criteria for Quantum Iterative Power Algorithms}

\author*[1]{\fnm{Andr\'as} \sur{Cz\'egel}}

\author[1]{\fnm{Bogl\'arka} \sur{G.-T\'oth}}

\affil*[1]{\orgdiv{Department of Computational Optimization}, \orgname{University of Szeged}, \orgaddress{\street{\'Arp\'ad t\'er 2.}, \city{Szeged}, \postcode{6720}, \country{Hungary}}}

\abstract{In the vast field of Quantum Optimization, Quantum Iterative Power Algorithms (QIPA) has been introduced recently with a promise of exponential speedup over an already established and well-known method, the variational Quantum Imaginary Time Evolution (varQITE) algorithm. Since the convergence and error of varQITE are known, the promise of QIPA also implied certain collapses in the complexity hierarchy — such as NP $\subseteq$ BQP, as we show in our study. However the original article of QIPA explicitly states the algorithm does not cause any collapses. In this study we prove that these collapses indeed do not occur, and with that, prove that the promised exponential separation is practically unachievable. We do so by introducing criteria for the exponential separation between QIPA$_2$ that uses a double exponential function and varQITE, and then showing how these criteria require certain properties in problem instances. After that we introduce a preprocessing step that enforces problems to satisfy these criteria, and then we show that the algorithmic error blows up exponentially for these instances, as there is an inverse polynomial term between speedup and the error. Despite the theoretical results, we also show that practically relevant polynomial enhancement is still possible, and show experimental results on a small problem instance, where we used our preprocessing step to achieve the improvement.}

\maketitle

\section{Introduction}

Quantum optimization is at the forefront of quantum computational and information-theoretical research \cite{Abbas2023, Dalzell2023QuantumAA, Barends2015, QIPU24}. Most efforts focus on discrete optimization problems, either as global optimization problems or on more specific problems like linear or quadratic combinatorial optimization \cite{Quintero2022, DeSantis2024, Gacon2021}. Enhancements also appear, such as decomposition methods \cite{Zhao2022, Franco2023} or quantum resource optimization \cite{Franco2023, Chatterjee2024}.

Quantum Time Evolution (QTE) based approaches are also able to tackle these problems. In general, they describe the process of evolving a quantum state over time under some Hamiltonian $H$ \cite{Barends2015, Barison2021}. While real time evolution can address problems in the class BQP (Bounded-error Quantum Polynomial Time) \cite{Cubitt2018}, which is the quantum analogue of P, 
Quantum Imaginary Time Evolution (QITE) can address a wider class, which we believe contains even QMA-hard problems \cite{Kempe2004}, where QMA is the quantum complexity class Quantum Merlin-Arthur, which is analogous to the classical class NP. 

To utilize this on a gate-based quantum computer, translation of the time evolution into quantum gates is needed. The translation can be a mathematical approximation, e.g.\ Trotterization \cite{Motta2019, Lloyd1996}, however those circuits are usually too deep for the current quantum hardware. An alternative is to use shallow parameterized circuits. The variational QITE (varQITE) algorithm \cite{McArdle2019, Motta2019} in its various formulations uses this approach to simulate the time evolution.

Since combinatorial optimization problems usually involve an exponentially large number of possible configurations in the search space, algorithmic solutions often look for heuristic or numerical cuts, or alternatively an information-theoretic compression of certain regions, or even the entire search space. The Iterative Power Algorithm (IPA) \cite{Soley2021} utilizes the latter concept by using tensor trains \cite{Oseledets2011}, also formulated and referred as Matrix Product States (MPS \cite{stlund1995}), to efficiently compress and encode the full search space of certain kinds of problems. This is a fully classical approach, however it inherits the limitations of MPS, namely their compression capabilities hold only for well-structured problems. To enhance the algorithm, \cite{Kyaw2023} used quantum states and their variational approximation with the concept of IPA to overcome the weakness of Tensor Train representation. They introduced the Quantum Iterative Power Algorithms (QIPA). Since QIPA is a full quantum generalization of IPA, they investigated thoroughly two versions: the one with a simple exponential oracle, which turns out to be an equivalent formulation to the varQITE algorithm \cite{McArdle2019}. The other one uses a double exponential function, which we call QIPA${_2}$.

\subsection{Motivation}

In \cite{Kyaw2023}, we find a complexity analysis that compares varQITE to QIPA$_2$. Based on that analysis, the authors of the paper state that under some conditions, an exponentially faster convergence can be achieved. Besides that, however, they state that this does not introduce any collapse in complexity hierarchical relations. 

The two statements together grabbed our attention. How is that possible that an algorithm, given some conditions on problem structure, yield an exponential speedup, yet it does not affect problem class relations? Can we find a set of problems that we can now solve exponentially faster? What class do they belong? Or, alternatively, is it possible for both statements to hold at the same time? What if not? Our work focuses on these statements, by proving or disproving them, and the consequences that the answers to the questions entail.

\subsection{Methodology}

In the first part, we define conditions for a problem structure that yield the promised exponential separation in the algorithmic iterations.

In the second part, we show a small algorithmic addition that enables exponential separation for problems that naturally do not possess the structure to satisfy the conditions defined in the previous step.

After that we consider algorithmic error bounds that have not been discussed previously. From that, we prove the original statement of the article \cite{Kyaw2023} which states the QIPA$_2$ algorithm does not collapse complexity theoretical hierarchies, however we also show this entails that the promised theoretical exponential separation between the algorithm is practically inaccessible.

\section{Results}
\subsection{Exponential separation criteria}

First, we would like to introduce the convergence rate analysis based on the algorithmic bounds provided by the original studies \cite{Kyaw2023,Soley2021}.

From \cite{Kyaw2023} we can define a lower bound on the number of IPA iterations to amplify the amplitude of the solution to measure it with chance greater than $1/2$, for both varQITE and QIPA$_2$. Let us use the notation and assumptions of \cite{Soley2021}, which is the following. The unique solution is encoded by the largest eigenvalue of $H$ as $\lambda_1$ and any other eigenvalue could be noted by $\lambda_2$, where $\lambda_1 > \lambda_2 > 0$. Note that the original article \cite{Soley2021} considers all \textit{other} eigenvalues to have the same value. Furthermore, let $n$ denote the number of variables in the system and $\kappa_{\text{varQITE}}, \kappa_{\text{QIPA}_2}$ are lower bounds on the number of iterations for the respective algorithms. The bounds from \cite{Soley2021,Kyaw2023} are as follows:

% [todo: lambda, n integer]

\begin{equation}
    \kappa_{\text{varQITE}} = \frac{n}{\log(\lambda_1 / \lambda_2)},
\end{equation}
and

\begin{equation}
    \kappa_{\text{QIPA}_2} = \frac{n}{\lambda_1 - \lambda_2}.
    \vspace{16pt}
\end{equation}

Now consider the fact that we are looking for exponential separation, i.e. $\kappa_{\text{varQITE}} = \Omega \left(exp(n)\right)$ while $\kappa_{\text{QIPA}} = \mathcal{O} \left(poly(n)\right)$. We do that with the following methodology: to fulfill the exponential separation, we present boundary conditions for $\lambda_1$ and $\lambda_2$ in the number of variables $n$.

\subsubsection{Boundary conditions}
We formulate a system of inequalities with assumptions for an exponential separation. The assumptions on the number of steps can be written as follows.
\begin{equation}
    \begin{cases}
        \qquad \frac{n}{\log{\left(\lambda_1 / \lambda_2\right)}} \quad&\geq\quad c2^n  \\[6pt]
        \qquad\frac{n}{\lambda_1 - \lambda_2} \quad&\leq\quad dn^k \label{eq:ineq2} \\[6pt]
        \qquad \qquad \lambda_1 \quad&>\quad \lambda_2
    \end{cases} \\
\end{equation}
where $c, d, \text{and }k$ are problem specific positive real constants.

From the inequalities above, we obtain the following conditions on the exponential separation and of $\lambda_1$, $\lambda_2$ and $n$. For detailed derivations and proof for the conditions, see Appendix \ref{app:a}.

\begin{cond} ~
    \begin{enumerate}[I.]
\label{conditions}
    \item $\qquad\lambda_1 - \lambda_2 \geq 1/\text{poly}(n) \sim 1/\text{poly}(\log(\lambda_2))$ 
    \item $\qquad\lambda_1/\lambda_2 \sim 1+1/\exp(n)$ 
    \item $\qquad\lambda_1, \lambda_2 \notin \mathcal{O}(\text{poly}(n))$
\end{enumerate}
\end{cond}

In the IPA article \cite{Soley2021}, the authors used an explicit mapping between a discrete minimization problem and a discrete maximization problem. If we omit the transformation, we arrive at a classical discrete optimization problem's quantum representation. For this problem, we obtained the following structural conditions:

\begin{enumerate}
    \item The optimal value scales super-polynomially with the input size.
    \item The optimal value is inverse super-polynomially close to the closest feasible solution value.
\end{enumerate}

In this scenario, the varQITE needs exponentially many iterations to reach an error less than $1/2$. At the same time, QIPA with the double exponential function needs only poly($n$) iterations.

\subsection{Artificial separation enforcement}

We have separation statements for the eigenvalues of $H$ both in relative and absolute scale. If both gaps are small, we do not have exponential separation between varQITE and QIPA$_2$. In this case, we can \textit{artificially} introduce the necessary gaps by \textit{upscaling} the Hamiltonian by a factor $\alpha$.

\begin{equation}
    \Hat{H} = \alpha H
\end{equation}

This operation does not affect the corresponding eigenvectors, however it scales the eigenvalues by $\alpha$. That implies for the eigenvalues of $\hat{H}$, denoted by $\hat{\lambda_1}$ and $\hat{\lambda_2}$,
\begin{align}
    \frac{\hat\lambda_1}{\hat\lambda_2} ~&=~ \frac{\lambda_1}{\lambda_2}, \\[8pt]
    \hat\lambda_1 - \hat\lambda_2 ~&=~ \alpha \left(\lambda_1 - \lambda_2\right).
\end{align}

This upscaling of the problem Hamiltonian is able to put the problem under the criteria above, i.e.\ Conditions \ref{conditions}.
However, when the absolute gap $\left(\lambda_1-\lambda_2\right) \sim 1/\exp(n)$, for exponential separation, $\alpha$ must be of order $\exp(n)$. We also show a detailed example in Appendix \ref{app:example}.

\subsection{Error bounds}
\label{sec:error}

With the artificial enforcement, and taking both single step time complexity and overall approximation complexity into account, it is theoretically possible to achieve the exponential separation for any problem with sufficiently large upscaling factor.

That is, one QIPA iteration requires polynomially many operations in the problem size, and we need polynomially many iterations to boost the solution amplitude as well. This concept would describe a Quantum-Classical Fully Polynomial-Time Approximation Scheme (QCFPTAS) for the desired measurement probability of the solution, that is strictly larger than 1/2. Then, our measurement probability is easy to drive above 2/3, which effectively would mean this algorithm places NP-complete problems like the Maximum Cut (see Appendix \ref{app:example}) into BQP. That would also mean NP $\subseteq$ BQP which is widely conjectured not to be true, however without considering the implicit algorithmic approximation errors of the algorithms themselves, it holds.

As pointed out above, we have not considered new algorithmic error bounds coming from our criteria, nor implicit approximation subroutines errors. We show that the statement in \cite{Kyaw2023}, which states that the QIPA does not affect complexity hierarchical relations, holds. However, we also show that the exponential separation is practically unachievable. 

Since the upper bound of the error in QIPA$_2$ is higher than in varQITE as pointed out in \cite{Kyaw2023},

\begin{equation}
    \varepsilon_{\text{\tiny{QIPA}}_2} \geq \varepsilon_{\text{\tiny{varQITE}}} + \Delta\delta\tau + \mathcal{O}(\delta\tau^{3/2}) ,\label{eq:error}
\end{equation}
where
\begin{equation}
    \Delta^2 = \left\langle \left( \left(1+e^{H\delta t}\right)^2 / \left(\delta \tau^2\right) + 2 \left(e^{H\delta t}-1\right)H/(\delta\tau) - \left(e^{H \delta t}-2\right)H^2 \right) \right\rangle
\end{equation}
for $\delta$ being the step size, $\tau$ the time of the imaginary evolution and $H$ is the time evolution Hamiltonian.

The extra error term $\Delta\delta\tau$ depends linearly on $H$, as $\Delta^2$ depends on the expectation value of a sum of operators in which the dominating term is quadratic in H. By our Conditions \ref{conditions}, we know that the spectrum of $H$ scales super-polynomially, and together with this error term, they are sufficient to prove that the exponential separation yields a super-polynomial blow up in the error of the algorithm.

However, we can give a stronger result examining the error of base algorithm, varQITE, under our generalization by upscaling. Since upscaling can put any problem from above into the exponential separation regime, it is also relevant that varQITE's own error also blows up exponentially upon exponential upscaling (see Appendix \ref{app:varQITE}), because its error depends on the variance of $\hat{H}$.

\section{Discussion}

Before further discussion, we would like to highlight the fact that as we skipped over the algorithmic error terms before, we also skipped another, practically important factor: numerical calculations on a noisy quantum device. Both the basic varQITE and the discussed QIPA use some kind of variational principle. If we take the common choice, McLachlan's variational principle \cite{McLachlan1964} as an example, we find a linear system of equations for parameter change calculations. The system's solution method is purely classical, however the preparation of the matrix and the constant vector (see $F$ and $C$ in \eqref{eq:step_error}-\eqref{eq:C}) is a quantum subroutine for each entry of the respective elements.

This is not negligible, and in practice, calculating those two elements provides a major component of the error of the algorithms we discussed. The error calculation for practical usefulness is outside the scope of this study, however we would like to state here that QITE already has a dual formulation \cite{Gacon2024}, which does not need to calculate those two elements at all. On the other hand, for practical error considerations in intermediate-scale quantum algorithms, \cite{Bharti2022} provides a great perspective.

Despite the fact that exponential separation is theoretically unreachable, a polynomial one is possible. Since this is a wide statement and includes most practical questions, this could be an interesting and relevant question in a future study.

Another interesting approach could be to use Neural Quantum States (NQS) \cite{Lange2024} in QIPA with the aim of exploring the whole search space of a discrete problem, while taking advantage of the noise-free efficient approximate compression of the quantum system that NQS provide.

\section{Conclusion}\label{sec13}

In our study, we derived how the Quantum Iterative Power Algorithms do not yield exponential speedup, nor collapses in complexity theoretical hierarchies. We proved the non-collapse statement of \cite{Kyaw2023} and provided clarification of its promises. Furthermore we showed that polynomial enhancement is possible, supporting it with an example and experimental results. Our main conclusion is that this algorithm family theoretically does not imply exponential separation, and on the other, practically relevant side, a more in-depth study could show practical improvements over varQITE and possibly other algorithms.

\backmatter

\begin{appendices}

\section{Numerical conditions}
\label{app:a}

We derive the numerical conditions from \eqref{eq:ineq2}.

Starting with the first inequality, 
\begin{align}
    c2^n \quad&\leq\quad \frac{n}{\log{\left(\lambda_1 / \lambda_2\right)}} \\[6pt]
    c2^n \log{\left(\lambda_1 / \lambda_2\right)} \quad&\leq\quad n\label{eq:lambda1leq1} \\[6pt]
    \log{\left(\lambda_1 / \lambda_2\right)} \quad&\leq\quad \frac{n}{c2^n} \label{eq:lambda1leq2} \\[6pt]
    \lambda_1 / \lambda_2 \quad&\leq\quad 2^{\frac{n}{c2^n}} \label{eq:lambda1leq3}\\[6pt]
    \lambda_1 \quad&\leq\quad 2^{\frac{n}{c2^n}}\lambda_2\label{eq:lambda1leq4}
\end{align}
where \eqref{eq:lambda1leq1} is allowed since $\lambda_1 > \lambda_2$, that means $\left(\lambda_1 / \lambda_2\right) > 1$ thus $\log{\left(\lambda_1 / \lambda_2\right)} > 0$. 
The second step \eqref{eq:lambda1leq2} is allowed because $n > 0$ and $c > 0$, thus $c2^n > 0$. The next step to \eqref{eq:lambda1leq3} is valid because the log is of base 2, monotonic, and increasing function.
The last step \eqref{eq:lambda1leq4} is due to $\lambda_2>0$.
% ezt hogy kellene szepen?

Moving to the second inequality of \eqref{eq:ineq2},
\begin{align}
        \frac{n}{\lambda_1 - \lambda_2} \quad&\leq\quad dn^k \\[6pt]
        n \quad&\leq\quad dn^k\left(\lambda_1 - \lambda_2\right) \label{eq:poly1} \\[6pt]
        \frac{n}{dn^k} \quad&\leq\quad  \lambda_1 - \lambda_2 \label{eq:poly2} \\[6pt]
        \frac1{dn^{k-1}} \quad&\leq\quad  \lambda_1 - \lambda_2 \label{eq:poly4}
\end{align}
where \eqref{eq:poly1} is allowed because $\lambda_1 > \lambda_2$ thus $\lambda_1 - \lambda_2 > 0$. \eqref{eq:poly2} is because $d>0$ and $n^k>0$, and so is their product.

By substituting the first result, \eqref{eq:lambda1leq4} into the second, \eqref{eq:poly4},
\begin{align}
        \frac1{dn^{k-1}} \quad&\leq\quad  \lambda_1 - \lambda_2 \\[6pt]
        \frac1{dn^{k-1}} \quad&\leq\quad  2^{\frac{n}{c2^n}}\lambda_2 - \lambda_2 \label{eq:conn1}\\[6pt]
        \frac1{dn^{k-1}} \quad&\leq\quad  \lambda_2\left(2^{\frac{n}{c2^n}} - 1\right) \\[6pt]
        \frac1{dn^{k-1} \left(2^{\frac{n}{c2^n}} - 1\right)} \quad&\leq\quad  \lambda_2 \label{eq:conn3}
\end{align}
First step \eqref{eq:conn1} is an over-estimation using \eqref{eq:lambda1leq4}, however since we're looking for the minimum possible value of $\lambda_1$ and $\lambda_2$, and since their growth is monotonic in $n$, we're interested in the case where the equality holds. The third step \eqref{eq:conn3} is legal since $n>0$ and $c>0$ and thus $2^{\frac{n}{c2^n}} - 1 > 0$.

We can also derive a similar lower bound for $\lambda_1$. For that, we express $\lambda_2$ from \eqref{eq:lambda1leq4} as
\begin{align}
    \frac1{2^{\frac{n}{c2^n}}} \lambda_1 \quad&\leq\quad \lambda_2. \label{eq:l1l2relative2}
\end{align}

Same as in the previous step: legal since $n>0$ and $c>0$.

\begin{align}
        \frac1{dn^{k-1}} \quad&\leq\quad  \lambda_1 - \lambda_2 \\[6pt]
        \frac1{dn^{k-1}} \quad&\leq\quad  \lambda_1 - {2^{\frac{-n}{c2^n}}} \lambda_1 \\[6pt]
        \frac1{dn^{k-1}} \quad&\leq\quad  \lambda_1 \left(1 - {2^{\frac{-n}{c2^n}}}\right) \\[6pt]
        \frac1{dn^{k-1} \left(1 - {2^{\frac{-n}{c2^n}}}\right)} \quad&\leq\quad  \lambda_1  
\end{align}

Now, we express the order of $\lambda_2$ as a function of $n$, namely, that $\lambda_2$ needs to be super-polynomial in $n$.
As a consequence, $\lambda_1$ also have to be super-polynomial in $n$.

\begin{theorem}
\label{prop:lambaexp}
For the exponential separation to hold between varQITE and QIPA$_2$, the problem Hamiltonian's second largest eigenvalue $\lambda_2$, must be super-polynomial in $n$, formally, for any problem and any positive real $\ell$ 
\begin{equation}
    \lambda_2 \quad\neq\quad \mathcal{O}(n^\ell).
\end{equation}

\end{theorem}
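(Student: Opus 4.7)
The plan is to take the lower bound on $\lambda_2$ that has already been derived in the appendix, namely
$$\lambda_2 \;\geq\; \frac{1}{dn^{k-1}\bigl(2^{n/(c2^n)} - 1\bigr)},$$
and show directly that this right-hand side grows faster than any polynomial in $n$. In other words, I would reduce the theorem to an asymptotic analysis of the denominator, in particular of the factor $2^{n/(c2^n)} - 1$, which is what controls the whole growth rate.

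First I would observe that the exponent $x_n := n/(c2^n)$ satisfies $x_n \to 0$ as $n \to \infty$, so the factor $2^{x_n} - 1$ is best handled by the small-$x$ expansion of $2^x$. Using $2^x - 1 = x\ln 2 + O(x^2)$ as $x \to 0$, I get
$$2^{n/(c2^n)} - 1 \;=\; \frac{n\ln 2}{c\,2^n} \,+\, O\!\left(\frac{n^2}{c^2\,4^n}\right),$$
so for all sufficiently large $n$ the factor is bounded above by $\tfrac{2n\ln 2}{c2^n}$, say. Substituting this into the lower bound then gives
$$\lambda_2 \;\geq\; \frac{c\,2^n}{2\,d\,n^k \ln 2}\qquad\text{for large }n.$$
Since $2^n/n^k$ dominates every polynomial $n^\ell$ as $n \to \infty$, this immediately yields $\lambda_2 \notin \mathcal{O}(n^\ell)$ for any fixed $\ell > 0$, which is the claim of the theorem. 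The corollary for $\lambda_1$ then follows from $\lambda_1 > \lambda_2$.

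The only non-routine step is the asymptotic estimate of $2^{x_n} - 1$, but it is standard; the mild subtlety is simply to be clear that the constants $c, d, k$ are fixed problem-specific quantities, so the asymptotic statement is in $n$ alone, with those constants absorbed into the implied constant of the $\mathcal{O}$-notation. I do not anticipate any real obstacle: the bound \eqref{eq:conn3} has already done the heavy lifting, and the remaining work is essentially verifying that a lower bound of order $2^n/n^k$ dispatches any polynomial upper bound.
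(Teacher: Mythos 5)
Your proposal is correct and follows essentially the same route as the paper: both start from the lower bound \eqref{eq:conn3} and reduce the theorem to the asymptotics of $2^{n/(c2^n)}-1$, which you handle via the expansion $2^x-1 = x\ln 2 + O(x^2)$ and the paper handles via the equivalent limit $\lim_{x\to 0}(2^x-1)/x=\ln 2$, both arriving at a bound of order $2^n/n^{k}$ up to constants. The only difference is presentational — you give a direct explicit inequality for large $n$ whereas the paper argues by contradiction through a limit computation of $\beta$ — and your direct version is, if anything, slightly cleaner since it avoids the paper's nonstandard limit characterization of $\mathcal{O}(\cdot)$.
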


\begin{proof}

Let us reformulate the statement mathematically by using our lower bound on $\lambda_2$ from \eqref{eq:conn3}:

\begin{equation}
    \frac1{dn^{k-1} \left(2^{\frac{n}{c2^n}} - 1\right)} \quad\neq\quad \mathcal{O}(n^\ell)
\end{equation}
for a real positive constant $\ell$.

 First, let us relax $n$ to a non-negative real, and recall the condition that for any function $f,g: \mathbb{R}^+\to \mathbb{R}^+$
\begin{equation*}
    f(n) = \mathcal{O}(g(n)) \quad \Longleftrightarrow \quad \lim_{n \to \infty} \frac{f(n)}{g(n)} = \beta \qquad 0\leq\beta \in \mathbb{R}.
\end{equation*}
Thus, if $\beta=\infty$, $f(n) \neq \mathcal{O}(g(n))$.

\vspace{16pt}
\noindent
Let us show it indirectly. Suppose that
\begin{equation}
    \frac1{dn^{k-1} \left(2^{\frac{n}{c2^n}} - 1\right)} \quad=\quad \mathcal{O}(n^\ell) \label{eq:setup}
\end{equation}%
thus 
\begin{align}
        \beta &= \lim_{n \to \infty} \frac1d \frac{1}{n^\ell n^{k-1} \left(2^{\frac{n}{c2^n}} - 1\right)} \\[6pt]
        &=\frac1d \lim_{n \to \infty} \frac1{n^{\ell+k-1} \left(2^{\frac{n}{c2^n}} - 1\right)} \label{eq:lim1} 
\end{align}

From limit theoretical principles \cite{limits}, we recall that for any $f$ on (0,$\infty$), $\lim_{x\to\infty}f(x)$ exists if and only if $\lim_{x \to 0+}f(\frac1x)$ exists and in this case, the two limits are equal. Combining that with the fact that $\lim_{x \to 0} \frac{2^x - 1}{x} = \ln{2}$, we get 
\begin{equation}
    \lim_{x \to \infty} \frac{2^\frac1x - 1}{\frac1x} = \ln{2}.
\end{equation}

Now let us substitute $x$ with the term $\frac{c2^n}{n}$. It is continuous and monotonically increasing for $n > \frac1{\ln2}$, that assures $\frac{c2^n}{n} \to\infty$ as $n\to\infty$. Thus we arrive at
\begin{equation}
    \lim_{n \to \infty} \frac{2^{\frac{n}{c2^n}} - 1}{\frac{n}{c2^n}} = \ln{2}.
\end{equation}
This can be reformulated into an expression that we can substitute straight into \eqref{eq:lim1}. We use the fact that  $\frac1{\lim_{x \to a} f(x)} = \lim_{x \to a} \frac1{f(x)}$, so
\begin{equation}
    \lim_{n \to \infty} \frac{\frac{n}{c2^n}}{2^{\frac{n}{c2^n}} - 1} = \frac1{\ln{2}}. \label{eq:lim_sub}
\end{equation}

\vspace{20pt}
\noindent
Now we rearrange \eqref{eq:lim1} by multiplying it with $1 = \frac{c2^n}{n} \cdot \frac{n}{c2^n}$, and apply \eqref{eq:lim_sub} as follows
\begin{align}
        \beta &= \frac1d \lim_{n \to \infty} \frac{\frac{c2^n}{n}}{n^{\ell+k-1}} \frac{\frac{n}{c2^n}}{\left(2^{\frac{n}{c2^n}} - 1\right)}\\[6pt]
        &=\frac1d \lim_{n \to \infty} \frac{\frac{c2^n}{n}}{n^{\ell+k-1}} \frac1{\ln{2}}\\[6pt]
        &=\frac{c}{d \ln{2}} \lim_{n \to \infty} \frac{2^n}{n^{\ell+k}}\\[6pt]
        &= \infty.
\end{align}

This means that $\beta = \infty\not\in\mathbb{R}$, and thus the series diverges in our initial assumption \eqref{eq:setup}, which is a contradiction. Therefore, the value of $\lambda_2$ must be super-polynomial in $n$. 
\end{proof}

\section{Example for artificial criteria enforcement}
\label{app:example}

\subsection{Problem definition}

We show an example of how the criteria enforcement works in theory. The Maximum Cut problem is the usual example for many variational quantum algorithms, as it is a strongly NP-complete problem. The problem is to find a bipartition of nodes in an undirected, unweighted graph that has the most edges between nodes in different parties. The weighted version, where each edge has a non-negative weight, and the objective is to maximize the total weight of such inter-partition edges is also NP-complete in its decision problem version \cite{Karp1972}. As an approximability problem, it is APX-hard, meaning that there is no polynomial-time approximation scheme (PTAS) that is able to approximate the solution arbitrarily, unless P = NP. 

From a computer scientific perspective, to provide context, we would like to note that QIPA with the artificial criteria enforcement not only approximates the optimal solution but does so in fully polynomial time with a bounded error on measurements. Therefore, we show this example here, and later we address any possible issues.

For demonstration purposes, we now consider the problem that we encode into a commonly used quadratic unconstrained binary problem formulation. 

For a graph $G = (V, E)$ and $(i, j) \in E, ~w_{ij} > 0 \in \mathbb{R}$. In this problem $x_i\ (i\in V)$ are binary variables, and their value indicate the partition of the vertices.

A standard form would be
\begin{equation}
    C(x) = \sum_{(i,j) \in E} w_{ij}x_i(1-x_j),
\end{equation}
that provides a simple encoding with $x_i\rightarrow (1-Z_i)/2$ where $Z_i$ is the Pauli Z operator,
\begin{align}
    C(\textbf{Z}) &= \sum_{(i,j) \in E} \frac{w_{ij}}{4} (1-Z_i)(1+Z_j)  \\
    &= -\frac{1}{2}\sum_{i<j, (i,j) \in E} w_{ij} Z_i Z_j+\mathrm{const},
\end{align}
where $\mathrm{const} = \sum_{i<j}w_{ij}/2$. Next, we convert that into an Ising Hamiltonian. Note that we omit weights to individual vertices in our model, that is, setting the external magnetic fields of individual sites to zero in our Hamiltonian. If we look for only the optimal solution and not at its value, we can disregard the constant, which is linear in the number of edges, and disregard the factor in front and arrive at the following minimization problem:

\begin{equation}
    H = \sum_{i<j, (i,j) \in E} w_{ij}Z_iZ_j
\end{equation}

\subsection{Upscaling}

With the artificial scaler, we formulate this as
\begin{equation}
    \hat{H} = \alpha H = \sum_{i<j, (i,j) \in E} \alpha w_{ij} Z_iZ_j
\end{equation}
We would like to note here that the artificial enforcement method here is similar to introducing equally scaled weights.
The problem originally can yield inverse exponentially small gap, both in absolute and relative scale. Thus, for our exponential boost in the generic case, we need an exponentially large $\alpha$.

For the problem Hamiltonian $\hat{H}$, that blows up the eigenvalues, which enlarges the absolute gap from inverse exponential to a linear scale in $n$ while preserving the relative gap's possibly inverse exponential value in the generic case. In other words, for the MaxCut problem, which initially does not satisfy the exponential separation criteria in QIPA$_2$, we defined a simple transformation that enforces the formulation to satisfy the criteria. This also means that the upscaled version of our example generic weighted MaxCut has now become a candidate of the exponential separation. 

\subsection{Numerical example}

We show a numerical example from a MaxCut problem instance, where upscaling enhances convergence. 

We emphasize that this result shows evidence of a practical speedup by the upscaling on an example instance. For generic mathematical boundaries and claims on achievable improvements on any problem in a worst-case scenario, see our error analysis in \ref{sec:error}.

We built a small MaxCut instance with 7 nodes and randomly assigned integer valued weights from 0 (no edge) to 11. The exact graph, the optimal cut and its cost are shown on Figure \ref{fig:maxcutgraph}.

\begin{figure}[ht!]
    \centering
    \includegraphics[width=0.8\textwidth]{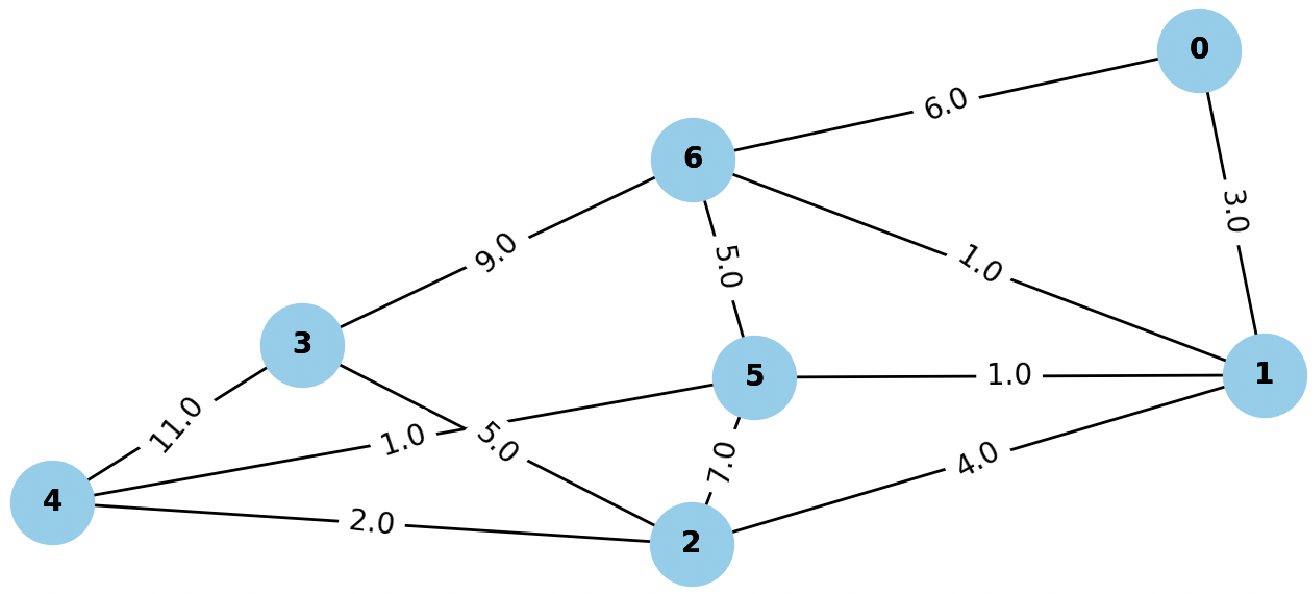}
    \caption{Sample MaxCut instance with 7 nodes, where the optimal partition is \{2,4,6\} and \{0,1,3,5\} with cut value of 49.}
    \label{fig:maxcutgraph}
\end{figure}

We used our own implementation, based on the code available in \cite{Kyaw2023} and integrated it into the Qiskit \cite{qiskit2024} library to run it on IBM's devices. For comparison, we used the varQITE implementation from the Qiskit Algorithms package. The results of the run are shown on Figure \ref{fig:maxcutres}. We can see that the QIPA$_2$ converges faster, getting close to -49.0, which is the true ground state value of the problem Hamiltonian, almost instantly, while varQITE slowly converges. The energy landscape of the example is nice, corresponding to the smooth monotonic curves on the plot. The sections, mostly visible on the QIPA curve, correspond to the iterations of the algorithms.
\begin{figure}[ht!]
    \centering
    \includegraphics[width=0.72\textwidth]{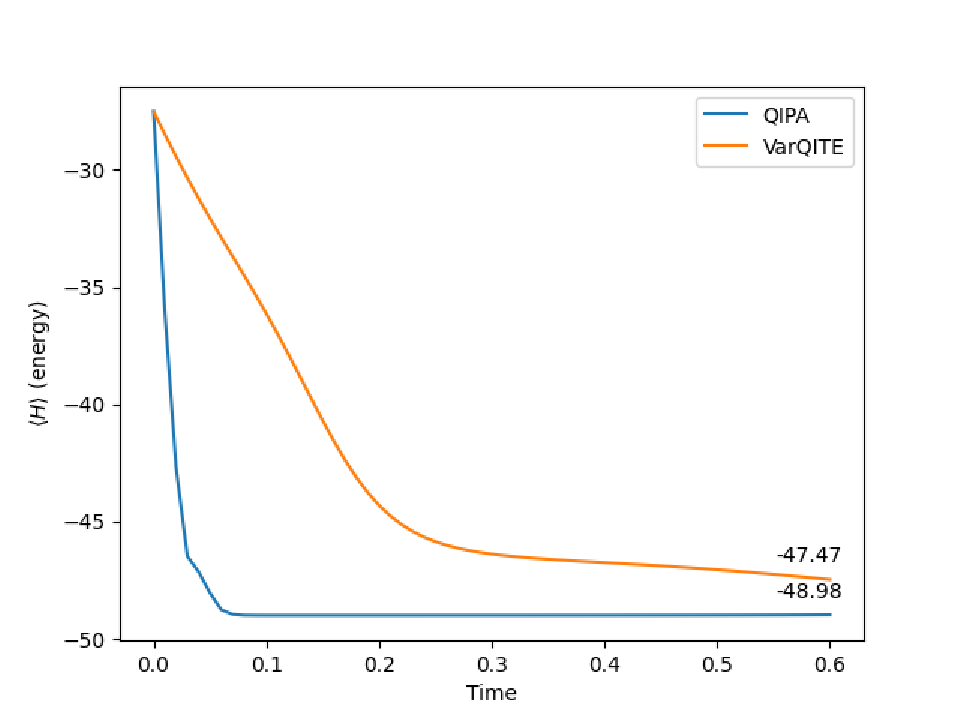}
    \caption{Results of running enhanced, $\alpha = 1.2$ times upscaled QIPA and varQITE on the graph instance above.}
    \label{fig:maxcutres}
\end{figure}

\section{varQITE under upscaling}
\label{app:varQITE}

To see how varQITE behaves under upscaling, we take the derived error bounds from \cite{Zoufal2023} as a basis. It is worth noting that both the algorithmic description and the error measurement in \cite{Zoufal2023} are different from the one we use. However, this error lower bound is also a lower bound for the varQITE algorithm in its QIPA formulation since that phase agnostic formulation only provides tighter bound. The error they provide is Bures distance \cite{Hayashi2006-kf}. 

The Bures metric and the $\ell_2$-norm are equivalent if the time evolution does not introduce change in the global phase or the variational ansatz state manages to perfectly capture the global phase change. In general, it is true that $B \leq \ell_2$, because $B$ does not measure global phase difference.

This means that Bures distance provides a lower bound on the error defined by the $\ell_2$ norm, used for \eqref{eq:error}, and, like trace distance, has quadratic dependence on $H$. 

Their error in Bures distance is defined by a sum of errors from the steps, that in a poly($n$) time algorithm we assume to be a polynomial factor. We work with discrete steps, thus we do not integrate the step size out, which introduces more approximation errors that we do not consider at this point for our bound.
For $|\psi_T^\theta\rangle$ as the parameterized ansatz state for the exact time evolved state $|\psi_T^*\rangle$, it is given by

\begin{equation}
    B(|\psi_T^\theta\rangle, \psi_T^*\rangle) = \delta_t \sum_k^\kappa \norm{|e_{t\delta}\rangle}_2
\end{equation}
where
\begin{equation}
    \norm{|e_t\rangle}_2^2 = Var(H) + \sum_{ij}\Dot{\theta_i}\Dot{\theta_j}F_{ij} + 2\sum_i \Dot{\theta_i}Re(C_i),\label{eq:step_error}
\end{equation}
\begin{equation}
    F_{ij} = Re \left( \frac{\partial \langle \psi_t^\theta |}{\partial \theta_i} \frac{\partial | \psi_t^\theta \rangle}{\partial \theta_j} - \frac{\partial \langle \psi_t^\theta |}{\partial \theta_i} | \psi_t^\theta \rangle \langle \psi_t^\theta | \frac{\partial | \psi_t^\theta \rangle}{\partial \theta_j} \right), \label{eq:F}
\end{equation}

\begin{equation}
    C_i = \frac{\partial \langle \psi_t^\theta |}{\partial \theta_i} H | \psi_t^\theta \rangle. \label{eq:C}
\end{equation}

Since $Var(\alpha H) = \alpha^2 Var(H)$, when we consider $\lambda_1, \lambda_2$ to be the eigenvalues of $H$, the upscaling introduces a quadratic factor to the variance of $H$. That is, when we consider the upscale factor to be $\alpha \sim \exp(n)$, an exponential blow-up in the error occurs. This term appears in the square of the error, but taking square root does not affect the exponential scaling. 

This result shows that for any problem that applies for the exponential separation, regardless of artificially upscaled or not, the error term blows up.

\end{appendices}

\bibliography{qipa-bib}

\begin{thebibliography}{10}
\providecommand{\doi}[1]{\url{https://doi.org/#1}}
\bibcommenthead

\bibitem[\protect\citeauthoryear{Abbas et~al.}{2024}]{Abbas2023}
Abbas A, Ambainis A, Augustino B, B\"{a}rtschi A, Buhrman H, Coffrin C, et~al.
\newblock Challenges and opportunities in quantum optimization.
\newblock Nature Reviews Physics. 2024 Oct;\doi{10.1038/s42254-024-00770-9}.

\bibitem[\protect\citeauthoryear{Dalzell et~al.}{2023}]{Dalzell2023QuantumAA}
Dalzell AM, McArdle S, Berta M, Bienias P, Chen CF, Gily'en A, et~al.
\newblock Quantum algorithms: A survey of applications and end-to-end complexities; 2023. Available from: \url{https://api.semanticscholar.org/CorpusID:263620717}.

\bibitem[\protect\citeauthoryear{Barends et~al.}{2015}]{Barends2015}
Barends R, Lamata L, Kelly J, García-Álvarez L, Fowler AG, Megrant A, et~al.
\newblock Digital quantum simulation of fermionic models with a superconducting circuit.
\newblock Nature Communications. 2015 Jul;6(1).
\newblock \doi{10.1038/ncomms8654}.

\bibitem[\protect\citeauthoryear{Cadavid et~al.}{2024}]{QIPU24}
Cadavid AG, Dalal A, Simen A, Solano E, Hegade NN.: Bias-field digitized counterdiabatic quantum optimization.
\newblock arXiv.
\newblock Available from: \url{https://arxiv.org/abs/2405.13898}.

\bibitem[\protect\citeauthoryear{Quintero and Zuluaga}{2022}]{Quintero2022}
Quintero RA, Zuluaga LF.
\newblock In: QUBO Formulations of Combinatorial Optimization Problems for Quantum Computing Devices. Springer International Publishing; 2022. p. 1–13.

\bibitem[\protect\citeauthoryear{De~Santis et~al.}{2024}]{DeSantis2024}
De~Santis D, Tirone S, Marmi S, Giovannetti V.: Optimized QUBO formulation methods for quantum computing.
\newblock arXiv.
\newblock Available from: \url{https://arxiv.org/abs/2406.07681}.

\bibitem[\protect\citeauthoryear{Gacon et~al.}{2021}]{Gacon2021}
Gacon J, Zoufal C, Carleo G, Woerner S.
\newblock Simultaneous Perturbation Stochastic Approximation of the Quantum Fisher Information.
\newblock Quantum. 2021 Oct;5:567.
\newblock \doi{10.22331/q-2021-10-20-567}.

\bibitem[\protect\citeauthoryear{Zhao et~al.}{2022}]{Zhao2022}
Zhao Z, Fan L, Han Z.
\newblock Hybrid Quantum Benders’ Decomposition For Mixed-integer Linear Programming.
\newblock In: 2022 IEEE Wireless Communications and Networking Conference (WCNC). IEEE; 2022. Available from: \url{http://dx.doi.org/10.1109/WCNC51071.2022.9771632}.

\bibitem[\protect\citeauthoryear{Franco et~al.}{2023}]{Franco2023}
Franco N, Wollschl\"{a}ger T, Poggel B, G\"{u}nnemann S, Lorenz JM.
\newblock Efficient MILP Decomposition in Quantum Computing for ReLU Network Robustness.
\newblock In: 2023 IEEE International Conference on Quantum Computing and Engineering (QCE). IEEE; 2023. Available from: \url{http://dx.doi.org/10.1109/QCE57702.2023.00066}.

\bibitem[\protect\citeauthoryear{Chatterjee et~al.}{2024}]{Chatterjee2024}
Chatterjee Y, Bourreau E, Rančić MJ.
\newblock Solving various NP-hard problems using exponentially fewer qubits on a quantum computer.
\newblock Physical Review A. 2024 May;109(5).
\newblock \doi{10.1103/physreva.109.052441}.

\bibitem[\protect\citeauthoryear{Barison et~al.}{2021}]{Barison2021}
Barison S, Vicentini F, Carleo G.
\newblock An efficient quantum algorithm for the time evolution of parameterized circuits.
\newblock Quantum. 2021 Jul;5:512.
\newblock \doi{10.22331/q-2021-07-28-512}.

\bibitem[\protect\citeauthoryear{Cubitt et~al.}{2018}]{Cubitt2018}
Cubitt TS, Montanaro A, Piddock S.
\newblock Universal quantum Hamiltonians.
\newblock Proceedings of the National Academy of Sciences. 2018 Aug;115(38):9497–9502.
\newblock \doi{10.1073/pnas.1804949115}.

\bibitem[\protect\citeauthoryear{Kempe et~al.}{2004}]{Kempe2004}
Kempe J, Kitaev A, Regev O.
\newblock In: The Complexity of the Local Hamiltonian Problem. Springer Berlin Heidelberg; 2004. p. 372–383.

\bibitem[\protect\citeauthoryear{Motta et~al.}{2019}]{Motta2019}
Motta M, Sun C, Tan ATK, O’Rourke MJ, Ye E, Minnich AJ, et~al.
\newblock Determining eigenstates and thermal states on a quantum computer using quantum imaginary time evolution.
\newblock Nature Physics. 2019 Nov;16(2):205–210.
\newblock \doi{10.1038/s41567-019-0704-4}.

\bibitem[\protect\citeauthoryear{Lloyd}{1996}]{Lloyd1996}
Lloyd S.
\newblock Universal Quantum Simulators.
\newblock Science. 1996 Aug;273(5278):1073–1078.
\newblock \doi{10.1126/science.273.5278.1073}.

\bibitem[\protect\citeauthoryear{McArdle et~al.}{2019}]{McArdle2019}
McArdle S, Jones T, Endo S, Li Y, Benjamin SC, Yuan X.
\newblock Variational ansatz-based quantum simulation of imaginary time evolution.
\newblock npj Quantum Information. 2019 Sep;5(1).
\newblock \doi{10.1038/s41534-019-0187-2}.

\bibitem[\protect\citeauthoryear{Soley et~al.}{2021}]{Soley2021}
Soley MB, Bergold P, Batista VS.
\newblock Iterative Power Algorithm for Global Optimization with Quantics Tensor Trains.
\newblock Journal of Chemical Theory and Computation. 2021 May;17(6):3280–3291.
\newblock \doi{10.1021/acs.jctc.1c00292}.

\bibitem[\protect\citeauthoryear{Oseledets}{2011}]{Oseledets2011}
Oseledets IV.
\newblock Tensor-Train Decomposition.
\newblock SIAM Journal on Scientific Computing. 2011 Jan;33(5):2295–2317.
\newblock \doi{10.1137/090752286}.

\bibitem[\protect\citeauthoryear{\"{O}stlund and Rommer}{1995}]{stlund1995}
\"{O}stlund S, Rommer S.
\newblock Thermodynamic Limit of Density Matrix Renormalization.
\newblock Physical Review Letters. 1995 Nov;75(19):3537–3540.
\newblock \doi{10.1103/physrevlett.75.3537}.

\bibitem[\protect\citeauthoryear{Kyaw et~al.}{2023}]{Kyaw2023}
Kyaw TH, Soley MB, Allen B, Bergold P, Sun C, Batista VS, et~al.
\newblock Boosting quantum amplitude exponentially in variational quantum algorithms.
\newblock Quantum Science and Technology. 2023 Oct;9(1):01LT01.
\newblock \doi{10.1088/2058-9565/acf4ba}.

\bibitem[\protect\citeauthoryear{McLachlan}{1964}]{McLachlan1964}
McLachlan AD.
\newblock A variational solution of the time-dependent Schrodinger equation.
\newblock Molecular Physics. 1964 Jan;8(1):39–44.
\newblock \doi{10.1080/00268976400100041}.

\bibitem[\protect\citeauthoryear{Gacon et~al.}{2024}]{Gacon2024}
Gacon J, Nys J, Rossi R, Woerner S, Carleo G.
\newblock Variational quantum time evolution without the quantum geometric tensor.
\newblock Phys Rev Res. 2024 Feb;6:013143.
\newblock \doi{10.1103/PhysRevResearch.6.013143}.

\bibitem[\protect\citeauthoryear{Bharti et~al.}{2022}]{Bharti2022}
Bharti K, Cervera-Lierta A, Kyaw TH, Haug T, Alperin-Lea S, Anand A, et~al.
\newblock Noisy intermediate-scale quantum algorithms.
\newblock Reviews of Modern Physics. 2022 feb;94(1).
\newblock \doi{10.1103/revmodphys.94.015004}.

\bibitem[\protect\citeauthoryear{Lange et~al.}{2024}]{Lange2024}
Lange H, Van~de Walle A, Abedinnia A, Bohrdt A.
\newblock From Architectures to Applications: A Review of Neural Quantum States.
\newblock Quantum Science and Technology. 2024 Aug;\doi{10.1088/2058-9565/ad7168}.

\bibitem[\protect\citeauthoryear{Hight}{1977 (2010)}]{limits}
Hight DW.
\newblock A concept of limits (Revised Edition).
\newblock Dover Publications; 1977 (2010).

\bibitem[\protect\citeauthoryear{Karp}{1972}]{Karp1972}
Karp RM.
\newblock In: Reducibility among Combinatorial Problems. Springer US; 1972. p. 85–103.

\bibitem[\protect\citeauthoryear{Javadi-Abhari et~al.}{2024}]{qiskit2024}
Javadi-Abhari A, Treinish M, Krsulich K, Wood CJ, Lishman J, Gacon J, et~al.: Quantum computing with {Q}iskit.

\bibitem[\protect\citeauthoryear{Zoufal et~al.}{2023}]{Zoufal2023}
Zoufal C, Sutter D, Woerner S.
\newblock Error bounds for variational quantum time evolution.
\newblock Phys Rev Appl. 2023 Oct;20:044059.
\newblock \doi{10.1103/PhysRevApplied.20.044059}.

\bibitem[\protect\citeauthoryear{Hayashi}{2006}]{Hayashi2006-kf}
Hayashi M.
\newblock Quantum Information.
\newblock Berlin, Germany: Springer; 2006.

\end{thebibliography}

\end{document}